\let\csname equation*\endcsname\relax
\let\csname endequation*\endcsname\relax
\newtheorem{proposition}{Proposition}
\newtheorem{proposition?}{Proposition?}
\newtheorem{theorem}{Theorem}
\theoremstyle{definition}
\newtheorem{definition}{Definition}
\newcommand{\nat}{\mathbb N} 
\newcommand{\hi}{\mathcal{H}} 
\newcommand{\lh}{\mathcal{L(H)}} 
\newcommand{\ket}[1]{|#1\rangle} 
\newcommand{\bra}[1]{\langle#1|} 
\renewcommand{\tr}[1]{\textrm{tr}\left[#1\right]} 
\newcommand{\id}{\mathbbm{1}} 
\newcommand{\meo}{\Omega} 
\newcommand{\salg}{\mathcal{B}(\Omega)} 
\newcommand{\bor}[1]{\mathcal{B}(#1)} 
\newcommand{\A}{\mathsf{A}}
\newcommand{\B}{\mathsf{B}}
\renewcommand{\P}{\mathsf{P}}
\newcommand{\J}{\mathcal{J}}
\newcommand{\I}{\mathcal{I}}
\newcommand{\sat}[1]{\mathfrak{s}(#1)} 
\newcommand{\pleq}{\preceq}
\newcommand{\psim}{\simeq}
\begin{document}

\title[]{Saturation of repeated quantum measurements}

\author{
Erkka Haapasalo$^1$,
Teiko Heinosaari$^2$,
Yui Kuramochi$^1$
}
\address{$^1$Department of Nuclear Engineering, Kyoto University,
Kyoto daigaku-katsura, Nishikyo-ku, Kyoto, 615-8540, Japan}
\address{$^2$Turku Centre for Quantum Physics, Department of Physics and Astronomy, University of Turku, FI-20014 Turku, Finland}

\begin{abstract}
We study sequential measurement scenarios where the system is repeatedly subjected to the same measurement process. We first provide examples of such repeated measurements where further repetitions of the measurement do not increase our knowledge on the system after some finite number of measurement steps. We also prove, however, that repeating the L\"uders measurement of an unsharp two-outcome observable never saturates in this sense, and we characterize the observable measured in the limit of infinitely many repetitions. Our result implies that a repeated measurement can be used to correct the inherent noise of an unsharp observable.
\end{abstract}

\section{Introduction}\label{sec:intro}

A non-trivial quantum measurement necessarily perturbs the initial state of the measured system. Hence, subsequent measurements on the same system are typically disturbed compared to the situation without the first measurement. Despite the unavoidable measurement disturbance, we can still hope to learn more about the initial state of the system by performing measurements in sequence. For instance, by measuring sequentially unsharp versions of conjugated observables, we can implement any covariant phase space observable, hence also an informationally complete observable \cite{CaHeTo11},\cite{CaHeTo12}. Properties of sequential measurements can also be used to study quantum foundations. For instance, the demand for the existence of a sequential product on an effect algebra excludes certain types of unphysical effect algebras \cite{GuGr2002},\cite{GhGu2004}.

Given that a sequential measurement of two different observables can give more information than the independent measurements of the same observables, then how useful can a repetition of the same measurement on the same system be?
A repetition of a von Neumann measurement just gives the same outcome every time and, since all the potential of the measurement is thus used already after the first step, there is no reason to measure such an observable more than once. However, quantum observables generally also allow less invasive measurements, so we may ask how useful repeating the same measurement can be and after how many repetitions the full potential of the measurement setup is reached.

In the present work, we study repeated applications of the same measurement device and compare different numbers of repetition. Such a scheme of quantum non-demolishion measurements has been proposed in a quantum optical setting \cite{Haroche_et_al_1992} and has also been performed experimentally \cite{Haroche_et_al_2007}. Repeated measurements and interactions between the system and a chain of probes has also been studied, e.g.,\ in \cite{MePe_2015} and \cite{BruJoMe_2010} with an emphasis on asymptotic behaviour of the system state in the latter reference. To properly understand the amount of classical information retrievable from the system through a repeated measurement, we define the {\it saturation step} to be the least number of repetitions of the measurement after which we obtain no added information. We discuss cases when a single application of the measurement gives all the information that can be extracted by repetition, but also cases when the saturation step is finite but greater than one. 

Using the techniques developed recently in \cite{Kuramochi15a}, we give a precise meaning for a measurement that is repeated infinite number of times.
Our most surprising finding is that, in some repeated measurement setups, saturation is never reached after finitely many steps but, in the limit of infinitely many repetitions, we may correct the inherent noise of the original observable produced by a single application of the measurement process. We show that this is the case for the L\"uders measurement of a binary observable (apart from trivial cases), and we find that the infinite consecutive L\"uders measurement is equivalent with the minimal sharp observable associated with it. 

\section{Preliminaries}

A measurement can be described at different levels. If we are only interested in measurement outcome probabilities, description at the level of observables suffices. If we, in addition, discuss conditional state transformation caused by the measurement, then the relevant concept is that of an instrument. We start by recalling the mathematical definitions of these two concepts. For more details, we refer to \cite{PSAQT82}, \cite{OQP97}, \cite{MLQT12}, \cite{Ozawa14}. 

\emph{Notation}. In the rest of the paper $\hi$ is a fixed Hilbert space. The dimension of $\hi$ can be either finite or countably infinite. We denote by $\lh$ the set of all bounded operators on $\hi$. 

\subsection{Observables and post-processing}

A quantum observable with finite number of outcomes is described by a map $\omega\mapsto \A(\omega)$ from a finite set $\Omega_\A$ to $\lh$ such that each $\A(\omega)$ is a positive operator and $\sum_\omega \A(\omega)=\id$, where $\id$ is the identity operator on $\hi$. An observable $\A$ is called \emph{sharp} if each operator $\A(\omega)$ is a projection.

Since we need to compare two observables, we recall the following \emph{post-processing preorder} of observables \cite{MaMu90a,Heinonen05,BuDaKePeWe05}. Given two observables $\A$ and $\B$, we denote $\A\pleq\B$ if there exists a map $\kappa(\cdot | \cdot):\Omega_\A \times \Omega_\B \to [0,1]$ such that 
\begin{equation}\label{eq:markov}
\sum_{\omega} \kappa(\omega | \omega') =1
\end{equation}
for all $\omega' \in \Omega_\B$ and
\begin{equation}\label{eq:ApleqB}
\A(\omega)=\sum_{\omega'} \kappa(\omega | \omega') \B(\omega') 
\end{equation}
for all $\omega \in \Omega_\A$. Further, we denote $\A\psim\B$ if $\A\pleq\B\pleq\A$, and $\A\prec\B$ if $\A\pleq\B$ but not $\B\pleq\A$.

A map $\kappa(\cdot | \cdot):\Omega_\A \times \Omega_\B \to [0,1]$ satisfying \eqref{eq:markov} is called a \emph{Markov kernel}. Note that Markov kernels defined on a fixed product space $\Omega_\A\times\Omega_\B$ form a convex set. A special class of Markov kernels are related to \emph{relabelings}, where we relabel the measurement outcomes, possibly giving the same label for different outcomes and hence merging them, but doing nothing else. If we start from an outcome space $\Omega_\B$, a relabeling is determined by a function $f:\Omega_\B \to \Omega_\A$. The corresponding Markov kernel $\kappa_f$ is then
\begin{align}
\kappa_f(\omega | \omega') = \delta_{\omega,f(\omega')} \, ,
\end{align}
and the post-processed observable is $\A=\B\circ f^{-1}$, i.e.,
\begin{equation}\label{eq:relabeling}
\A(\omega)=\sum_{\omega': f(\omega')=\omega} \B(\omega') \, .
\end{equation}

The post-processing preorder $\pleq$ can be used to compare the information-yielding power of observables: If $\A\pleq\B$ with $\A$ defined by $\B$ and a Markov kernel $\kappa$ as in \eqref{eq:ApleqB}, measuring $\B$ gives us at least as much information on the system as measuring $\A$ does. Indeed, if we measure $\B$, we obtain the outcome statistics of $\A$, i.e.,\ the probabilities $\Pi^\A_\varrho(\omega)=\tr{\rho\A(\omega)}$ of detecting the value $\omega$ through classical data processing represented by the kernel $\kappa$ from the outcome statistics of $\B$ {\it independent of the system state $\varrho$} since
\begin{equation}
\Pi^\A_\varrho(\omega)=\sum_{\omega'}\kappa(\omega|\omega')\Pi^\B_\varrho(\omega').
\end{equation}
Thus we may say that, if $\A\pleq\B$, then $\B$ is at least as informative as $\A$. Naturally, we interpret $\A\psim\B$ to mean that $\A$ and $\B$ are informationally equivalent; we may measure either of them and obtain the same amount of classical information.

\subsection{Instruments and their composition}

A quantum instrument, defined in the Heisenberg picture, is a function $\omega\mapsto \I_\omega$ such that each $\I_\omega$ is a normal completely positive map on $\lh$ and $\sum_\omega \I_\omega(\id) = \id$. The probability of getting a measurement outcome $\omega$ in an initial state $\varrho$ is $\tr{\varrho \I_\omega(\id)}$, hence the observable $\A^\I$ determined by $\I$ is given as 
\begin{equation}
\A^\I(\omega) = \I_\omega(\id) \, .
\end{equation}

An instrument $\I$ describes not only measurement outcome probabilities but also the conditional state transformation caused by the measurement process. 
As we use the Heisenberg picture, this is reflected in the transformation of observables. If we measure first $\I$ and then some other observable $\B$ on the same system, we obtain measurement outcomes $\omega$ and $\omega'$ with the probability $\tr{\varrho \I_\omega(\B(\omega'))}$.
 
If we perform subsequently more than two measurements, then we need to form a composition of instruments \cite{DaLe70},\cite{BuCaLa90}. Suppose we have two instruments $\I$ and $\J$. We denote by $\I_{\omega}\circ\J_{\omega'}$ the functional composition of maps $\I_{\omega}$ and $\J_{\omega'}$, i.e., 
\begin{equation}
\I_{\omega} \circ \J_{\omega'} (T) = \I_{\omega}( \J_{\omega'} (T))  \, .
\end{equation}
The order of the instruments in the composition $\I_{\omega}\circ\J_{\omega'}$ is such that the instrument $\I$ has been applied to the input state first. 

\section{Repeated measurement}
\begin{center}
\begin{figure}
\includegraphics[scale=0.22]{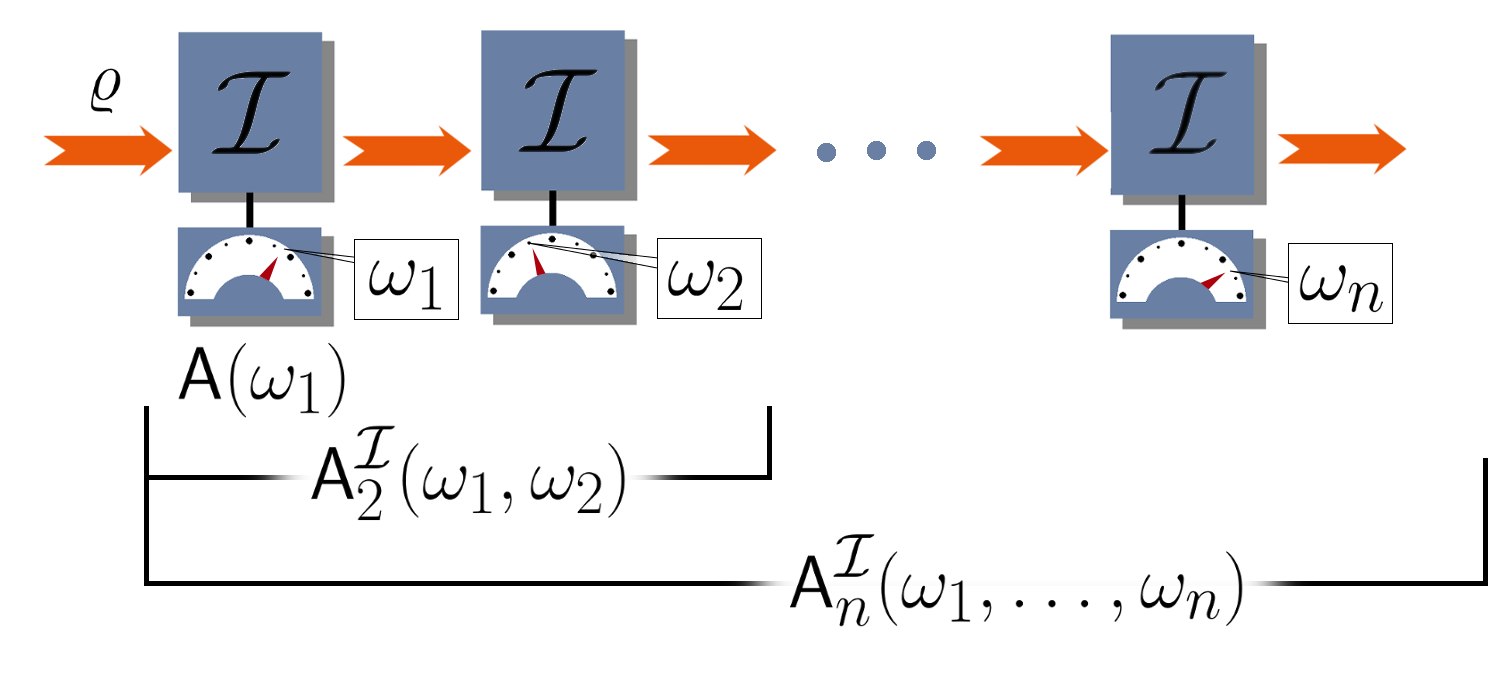}
\caption{\label{fig:chain}The measurement outcome is a finite sequence $(\omega_1 , \dots , \omega_n)$ and its probability is determined by the observable $\mathsf{A}^{\I}_n$ and the initial state $\varrho$.}
\end{figure}
\end{center}
Let $\I$ be the instrument describing a measurement device. We will now consider a repeated measurement scheme where we use the same measurement device repeatedly on the same system (Fig.~\ref{fig:chain}). If we repeat the measurement $n$ times, then the measurement outcome of the whole procedure is an element from the Cartesian product $\Omega^n$, and the probability of obtaining a particular $n$-tuple $(\omega_1,\ldots,\omega_n)$ is
\begin{equation}
\tr{\I_{\omega_1} \circ \cdots \circ \I_{\omega_n} (\id) \varrho } \, .
\end{equation}
We conclude that the observable $\A^\I_n$ corresponding to the $n$ repetitions of $\I$ is given as
\begin{equation}\label{eq:An}
\A^\I_n(\omega_1,\ldots,\omega_n)=\I_{\omega_1} \circ \cdots \circ \I_{\omega_n} (\id) \, ,
\end{equation}
and if we still make one more measurement round the related observable $\A^\I_{n+1}$ is given by the formula
\begin{align*}
\A^\I_{n+1}(\omega_1,\ldots,\omega_{n+1}) & =\I_{\omega_1} \circ \cdots \circ \I_{\omega_{n+1}} (\id) \\
& =\I_{\omega_1} (\A^\I_n (\omega_2,\ldots,\omega_{n+1}) ) \, .
\end{align*}
Since $\sum_\omega \I_\omega (\id) = \id$, we obtain
\begin{equation}
\sum_{\omega_{n+1}} \A^\I_{n+1}(\omega_1,\ldots,\omega_{n+1}) = \A^\I_{n}(\omega_1,\ldots,\omega_{n}) \, . 
\end{equation}
Thus, according to \eqref{eq:relabeling}, $\A_n$ is a relabeling of $\A_{n+1}$, so that $\A_{n} \pleq \A_{n+1}$. Therefore, we have the sequence
\begin{align}
\A^\I_1 \pleq \A^\I_2 \pleq \A^\I_3 \pleq \cdots \, ,
\end{align}
and hence by repeating the measurement we may learn more about the initial state of the system. 
The properties of this sequence will be the focus of our investigation. We start with a simple but important observation.

\begin{proposition}\label{prop:An}
If  $\A^\I_n\psim\A^\I_{n+1}$ for some $n\in\nat$, then $\A^\I_n \psim \A^\I_m$ for all $m\geq n$.
\end{proposition}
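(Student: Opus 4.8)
The plan is to reduce the whole statement to a single inductive step. Recall that before the proposition it was shown that $\A^\I_m\pleq\A^\I_{m+1}$ holds for every $m$, since $\A^\I_m$ is a relabeling of $\A^\I_{m+1}$. Hence the hypothesis $\A^\I_n\psim\A^\I_{n+1}$ is equivalent to the single nontrivial relation $\A^\I_{n+1}\pleq\A^\I_n$. I claim it suffices to establish the implication: if $\A^\I_{k+1}\pleq\A^\I_k$ for some $k\in\nat$, then $\A^\I_{k+2}\pleq\A^\I_{k+1}$. Indeed, starting from $k=n$ and iterating, this gives $\A^\I_{k+1}\pleq\A^\I_k$ for all $k\geq n$; combined with the always-valid reverse relation this yields $\A^\I_k\psim\A^\I_{k+1}$ for all $k\geq n$, and transitivity of the post-processing preorder (and hence of $\psim$) then gives $\A^\I_n\pleq\A^\I_m\pleq\A^\I_n$, i.e.\ $\A^\I_n\psim\A^\I_m$, for every $m\geq n$.

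For the inductive step I would exploit the one-step recursion $\A^\I_{j+1}(\omega_0,\omega_1,\dots,\omega_j)=\I_{\omega_0}\bigl(\A^\I_j(\omega_1,\dots,\omega_j)\bigr)$, which is immediate from \eqref{eq:An} and the definition of composition of instruments. Suppose $\A^\I_{k+1}(\vec\omega)=\sum_{\vec\eta}\kappa(\vec\omega\,|\,\vec\eta)\,\A^\I_k(\vec\eta)$ for some Markov kernel $\kappa$ on $\Omega^{k+1}\times\Omega^k$. Apply the linear map $\I_{\omega_0}$ to both sides; since the sum over $\Omega^k$ is finite, linearity of $\I_{\omega_0}$ suffices, and using the recursion on each side converts the identity into
\[
\A^\I_{k+2}(\omega_0,\vec\omega)=\sum_{\vec\eta}\kappa(\vec\omega\,|\,\vec\eta)\,\A^\I_{k+1}(\omega_0,\vec\eta)\,.
\]
Now define $\kappa'$ on $\Omega^{k+2}\times\Omega^{k+1}$ by $\kappa'\bigl((\omega_0,\vec\omega)\,\big|\,(\eta_0,\vec\eta)\bigr)=\delta_{\omega_0,\eta_0}\,\kappa(\vec\omega\,|\,\vec\eta)$. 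A one-line check gives $\sum_{(\omega_0,\vec\omega)}\kappa'(\cdots)=1$ for every $(\eta_0,\vec\eta)$, so $\kappa'$ is a Markov kernel, and summing $\kappa'$ against $\A^\I_{k+1}$ reproduces the displayed identity. Hence $\A^\I_{k+2}\pleq\A^\I_{k+1}$, completing the inductive step.

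I do not expect a genuine obstacle here; the conceptual content is entirely the observation that any post-processing witnessing $\A^\I_{k+1}\pleq\A^\I_k$ can be ``padded'' with the identity relabeling on the freshly prepended first outcome. The only points that require care are verifying the normalization of $\kappa'$ and noting that applying $\I_{\omega_0}$ term by term to the defining sum is legitimate because all outcome sets are finite, so no normality or continuity issue arises. Assembling the pieces as described in the first paragraph then yields $\A^\I_n\psim\A^\I_m$ for all $m\geq n$.
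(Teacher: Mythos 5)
Your proof is correct and follows essentially the same route as the paper's: both reduce the claim to the single inductive step $\A^\I_{k+1}\pleq\A^\I_k \Rightarrow \A^\I_{k+2}\pleq\A^\I_{k+1}$, obtained by applying $\I_{\omega_0}$ to the post-processing identity and padding the kernel with a Kronecker delta on the newly prepended outcome. The only difference is cosmetic (your explicit remark that finiteness of the outcome sets justifies the term-by-term application of $\I_{\omega_0}$, which the paper leaves implicit).
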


\begin{proof}
Let us assume that the claim holds for each $m=n,\ldots,\,k$ for some $k\in\nat$, $k>n$. Hence, there is a Markov kernel $\kappa(\cdot|\cdot):\Omega_\A^k\times\Omega_\A^{k-1}\to[0,1]$ such that
$$
\A^\I_k(\omega_1,\ldots,\omega_k)=\sum_{\omega'_1,\ldots,\omega'_{k-1}}\kappa(\omega_1,\ldots,\omega_k|\omega'_1,\ldots,\omega'_{k-1})\A^\I_{k-1}(\omega'_1,\ldots,\omega'_{k-1}).
$$
We may write
\begin{eqnarray*}
&&\A^\I_{k+1}(\omega_1,\ldots,\omega_{k+1})\\
&=&\I_{\omega_1}\big(\A^\I_k(\omega_2,\ldots,\omega_{k+1})\big)\\
&=&\sum_{\omega'_1,\ldots,\omega'_{k-1}}\kappa(\omega_2,\ldots,\omega_{k+1}|\omega'_1,\ldots,\omega'_{k-1})\I_{\omega_1}\big(\A^\I_{k-1}(\omega'_1,\ldots,\omega'_{k-1})\big)\\
&=&\sum_{\omega'_1,\ldots,\omega'_{k-1}}\kappa(\omega_2,\ldots,\omega_{k+1}|\omega'_1,\ldots,\omega'_{k-1})\A^\I_k(\omega_1,\omega'_1,\ldots,\omega'_{k-1})\\
&=&\sum_{\omega'_1,\ldots,\omega'_k}\tilde\kappa(\omega_1,\ldots,\omega_{k+1}|\omega'_1,\ldots,\omega'_k)\A^\I_k(\omega'_1,\ldots,\omega'_k),
\end{eqnarray*}
where
$$
\tilde\kappa(\omega_1,\ldots,\omega_{k+1}|\omega'_1,\ldots,\omega'_k)=\delta_{\omega_1,\omega'_1}\kappa(\omega_2,\ldots,\omega_{k+1}|\omega'_2,\ldots,\omega'_k)
$$
is a Markov kernel. Hence also $\A^\I_{k+1}\pleq\A^\I_k$ and, since the converse always holds, we have $\A^\I_{k+1}\psim\A^\I_k\psim\cdots\psim\A^\I_n$. Thus the claim holds for all $m$.
\end{proof}

The previous result makes the following notion meaningful.

\begin{definition}
Let $\I$ be an instrument and let $\{\A^\I_j\}_{j=1}^\infty$ be the sequence of observables defined as
\begin{equation}
\A^\I_n(\omega_1,\ldots,\omega_n)=\I_{\omega_1} \circ \cdots \circ \I_{\omega_n} (\id) \, .
\end{equation}
We say that the \emph{saturation step} of $\I$, denoted by $\sat{\I}$, is the smallest positive integer $n$ such that $\A^\I_n \psim \A^\I_{n+1}$. If $\A^\I_n \prec \A^\I_{n+1}$ for all $n\in\nat$, then we denote $\sat{\I}=\infty$.
\end{definition}

We say that an instrument $\I$ \emph{saturates at step $n$} if $n=\sat{\I}<\infty$.

\section{Instruments that saturate at the first step}

\subsection{Repeatable instrument}

An instrument $\I$ is called \emph{repeatable} if in the second repetition step we get the same measurement outcome as in the first step, with certainty \cite{DaLe70}, \cite{Ozawa84}. It is quite clear from this definition that a repeatable instrument cannot give any further information in additional repetitions. We will next show this using our framework.

Repeatability means that the probability of getting two different outcomes in subsequent measurements is zero, so for $\omega_1 \neq \omega_2$ we must have
\begin{equation}
\tr{\varrho \I_{\omega_1} \circ \I_{\omega_2} (\id)} = 0
\end{equation}
in all initial states $\varrho$.
It follows that 
\begin{equation}
\I_{\omega_1} \circ \I_{\omega_2} (\id) = \delta_{\omega_1 \omega_2} \I_{\omega_1} (\id) \, .
\end{equation}
In terms of the corresponding observables we have
\begin{equation}
\A^\I_2(\omega_1,\omega_2) = \delta_{\omega_1 \omega_2} \A^\I(\omega_1) = \sum_{\omega} \delta_{\omega \omega_1} \delta_{\omega_1 \omega_2} \A^\I(\omega) \, ,
\end{equation}
and hence $\A^\I_2 \psim \A^\I$. In conclusion, \emph{if an instrument $\I$ is repeatable, then $\sat{\I}=1$}.

\subsection{Preparative instrument}

Suppose we perform a measurement of an observable $\A$. After the measurement we prepare a state $\eta_{\omega}$ depending on the outcome $\omega$. The corresponding instrument is given by
$$
\I_\omega(T)=\tr{\eta_\omega T}\A(\omega),\qquad \omega\in\Omega_\A,\quad T\in\lh \, , 
$$
and we say that $\I$ is a preparative instrument. We have
\begin{align*}
\A^\I_2(\omega_1,\omega_2) & = \I_{\omega_1} \circ \I_{\omega_2} (\id)  = \tr{\eta_{\omega_1} \A(\omega_2)} \A(\omega_1) \\
& = \sum_\omega \delta_{\omega \omega_1} \tr{\eta_{\omega} \A(\omega_2)} \A(\omega) \, ,
\end{align*}
and hence $\A^\I_2 \psim \A$.
Therefore, $\sat{\I}=1$.

\subsection{Mixtures}\label{sec:mix}

Let $\I$ and $\J$ be two instruments related to an observable $\A$. For each $0\leq t \leq 1$, we form their convex mixture $t\I + (1-t)\J$ by defining
\begin{equation}
(t\I + (1-t)\J)_\omega (T) = t\I_\omega (T) + (1-t)\J_\omega (T)
\end{equation}
for all $\omega \in\Omega_\A$ and $T\in\lh$. Since
\begin{equation}
(t\I + (1-t)\J)_\omega (\id) = t\A(\omega ) + (1-t)\A(\omega ) = \A(\omega ) \, ,  
\end{equation}
this instrument determines the same observable $\A$ as well.

\begin{proposition}
If $\sat{\I}=\sat{\J}=1$, then $\sat{t\I + (1-t)\J}=1$.
\end{proposition}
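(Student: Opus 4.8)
The plan is to reduce the claim to the single post-processing relation $\A^\M_2 \pleq \A$, where I write $\M := t\I + (1-t)\J$ for the mixture. Since $\A^\M_1 = \A$ was already verified above and the relation $\A^\M_1 \pleq \A^\M_2$ holds for every instrument, once $\A^\M_2 \pleq \A$ is established we get $\A^\M_1 \psim \A^\M_2$; as $1$ is trivially the least positive integer, this gives $\sat{\M} = 1$.

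First I would expand the two-step observable of the mixture. Using that $\M$ determines the observable $\A$ (so $\M_{\omega_2}(\id)=\A(\omega_2)$) together with the defining formula of the convex mixture, and recalling that $\I_{\omega_1}(\A(\omega_2)) = \A^\I_2(\omega_1,\omega_2)$ and likewise for $\J$,
\[
\A^\M_2(\omega_1,\omega_2) = \M_{\omega_1}\bigl(\A(\omega_2)\bigr) = t\,\A^\I_2(\omega_1,\omega_2) + (1-t)\,\A^\J_2(\omega_1,\omega_2),
\]
so the two-step observable of $\M$ is the corresponding convex mixture of the two-step observables of $\I$ and $\J$.

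Next I would invoke the hypotheses. Since $\sat{\I}=1$ and $\sat{\J}=1$, by definition $\A^\I_2 \psim \A^\I_1 = \A$ and $\A^\J_2 \psim \A^\J_1 = \A$; in particular there are Markov kernels $\kappa^\I,\kappa^\J:\Omega_\A^2\times\Omega_\A\to[0,1]$ with $\A^\I_2(\omega_1,\omega_2) = \sum_\omega \kappa^\I(\omega_1,\omega_2\,|\,\omega)\,\A(\omega)$ and the analogous identity for $\J$. These two kernels automatically share the domain $\Omega_\A^2\times\Omega_\A$ because $\A^\I_1 = \A^\J_1 = \A$, so forming $t\kappa^\I + (1-t)\kappa^\J$ makes sense; substituting into the displayed formula yields $\A^\M_2(\omega_1,\omega_2) = \sum_\omega \bigl(t\kappa^\I + (1-t)\kappa^\J\bigr)(\omega_1,\omega_2\,|\,\omega)\,\A(\omega)$, and $t\kappa^\I + (1-t)\kappa^\J$ is again a Markov kernel since the Markov kernels on a fixed product outcome space form a convex set, as noted in the preliminaries. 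Hence $\A^\M_2 \pleq \A$, completing the argument. There is no genuine obstacle: the whole content is that post-processing a fixed observable through a convex mixture of instruments equals the convex mixture of the individual post-processings, combined with convexity of the set of Markov kernels; the only point deserving an explicit word is the matching of the kernel domains used above.
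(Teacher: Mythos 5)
Your proof is correct and follows essentially the same route as the paper's: expand $\A^{t\I+(1-t)\J}_2$ as the convex combination $t\,\A^\I_2+(1-t)\,\A^\J_2$, substitute the Markov kernels furnished by the hypotheses, and use convexity of the set of Markov kernels to conclude $\A^{t\I+(1-t)\J}_2 \pleq \A$. Your extra remarks on why this suffices (the automatic relation $\A_1 \pleq \A_2$) and on the matching of kernel domains are points the paper leaves implicit, but the argument is the same.
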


\begin{proof}
Assume that $\sat{\I}=\sat{\J}=1$, hence $\A^\I_2 \psim \A$ and $\A^\J_2 \psim \A$. It follows that there are Markov kernels $\kappa$ and $\kappa'$ such that
\begin{equation}
\A^\I_2(\omega_1,\omega_2)=\sum_{\omega} \kappa(\omega_1,\omega_2 | \omega) \A(\omega) 
\end{equation}
and
\begin{equation}
\A^\J_2(\omega_1,\omega_2)=\sum_{\omega} \kappa'(\omega_1,\omega_2 | \omega) \A(\omega) \, .
\end{equation}
We have
\begin{align*}
\A^{t\I + (1-t)\J}_2(\omega_1,\omega_2) & = (t\I + (1-t)\J)_{\omega_1}(\A(\omega_2)) \\
& = t \A^\I_2(\omega_1,\omega_2) + (1-t) \A^\J_2(\omega_1,\omega_2) \\
& = \sum_{\omega} (t\kappa(\omega_1,\omega_2 | \omega) + (1-t)\kappa'(\omega_1,\omega_2 | \omega)) \A(\omega) \, , 
\end{align*}
thus $\A^{t\I + (1-t)\J}_2 \pleq \A$. Therefore, $\sat{t\I + (1-t)\J}=1$.
\end{proof}
 
\section{Instruments that saturate at a finite step $n>1$}

Let $\hi$ be a Hilbert space with a finite dimension $d \geq 2$ and let $\{ \phi_k  \}_{k=1}^d$ be an orthonormal basis of $\hi .$ We define an instrument $\I$ with an outcome set $\Omega = \{0,1 \}$ by
\begin{equation}
	\I_\omega (T)
	:=
	L_\omega^\ast T L_\omega \, , 
\end{equation}
where
\begin{equation}
L_0 := \ket{\phi_d}  \bra{\phi_d} \, , \quad L_1 := \sum_{k=1}^{d-1} \ket{\phi_{k+1}} \bra{\phi_k} \, .
\end{equation}
Then we have the following:

\begin{proposition}
\label{thm:ladder}
\begin{enumerate}[(a)]
\item
If $1 \leq n \leq d - 1$, then $\A^\I_n \psim \P_n$, where $\P_n$ is the sharp observable defined by
\[
	\P_n (j)
	:=
	\begin{cases}
		\displaystyle
		\sum_{l=1}^{d-n} 
		\ket{\phi_l} \bra{\phi_l} ,
		&
		(j=1)
		; \\
		\ket{\phi_{d-n+j -1}} \bra{\phi_{d-n+j-1}},
		&
		(2 \leq j \leq n+1) .
		\end{cases}
\]
\item
If $ d -1 \leq n$, then $\A^\I_n \psim \P_{d-1}$, where $\P_{d-1}$ is the sharp observable defined by
\[
	\P_{d-1} (j) := \ket{\phi_j} \bra{\phi_j}
	\quad
	(1 \leq j \leq n) .
\]
\end{enumerate}
\end{proposition}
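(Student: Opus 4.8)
The plan is to compute every operator $\A^\I_n(\omega_1,\ldots,\omega_n)$ explicitly and to recognise the resulting observable. Since $\I_\omega(T)=L_\omega^\ast T L_\omega$, composing the maps in \eqref{eq:An} gives
\[
\A^\I_n(\omega_1,\ldots,\omega_n)=M_{\vec\omega}^\ast M_{\vec\omega},\qquad M_{\vec\omega}:=L_{\omega_n}L_{\omega_{n-1}}\cdots L_{\omega_1},
\]
where the product is ordered so that $L_{\omega_1}$ acts on a vector first. First I would record the elementary action on the basis: $L_1$ sends $\phi_k\mapsto\phi_{k+1}$ for $k\leq d-1$ and annihilates $\phi_d$, while $L_0=\kb{\phi_d}{\phi_d}$. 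The single structural fact needed is $L_1L_0=O$, because $\ran L_0=\complex\phi_d$ and $L_1\phi_d=0$. Hence $M_{\vec\omega}$ vanishes whenever the string $(\omega_1,\ldots,\omega_n)$ has a $0$ immediately followed by a $1$, so it can be non-zero only on the $n+1$ non-increasing strings, which I abbreviate $1^a0^{n-a}$ ($a$ ones followed by $n-a$ zeros), $0\leq a\leq n$.

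On those strings the product collapses: $M_{1^n}=L_1^n$, and, using $L_0^2=L_0$, $M_{1^a0^{n-a}}=L_0L_1^a$ for $0\leq a<n$. A short evaluation of these two products yields
\[
\A^\I_n(1^n)=\sum_{l=1}^{d-n}\kb{\phi_l}{\phi_l}\quad(\,=O\ \text{once}\ n\geq d\,),\qquad \A^\I_n(1^a0^{n-a})=\kb{\phi_{d-a}}{\phi_{d-a}}\ \ (0\leq a\leq d-1),
\]
and $\A^\I_n(1^a0^{n-a})=O$ when $a\geq d$. Matching indices then shows: for $1\leq n\leq d-1$ the $n+1$ non-zero values of $\A^\I_n$ are precisely $\A^\I_n(1^n)=\P_n(1)$ together with $\A^\I_n(1^{n+1-j}0^{j-1})=\P_n(j)$, $2\leq j\leq n+1$; and for $n\geq d$ the non-zero values are precisely the rank-one projections $\kb{\phi_j}{\phi_j}$, $1\leq j\leq d$, i.e.\ the outcomes of $\P_{d-1}$ (the case $n=d-1$ being literally the $n=d-1$ instance of part (a)). As a by-product this identifies the listed families as collections of mutually orthogonal projections summing to $\id$, so $\P_n$ and $\P_{d-1}$ are genuine sharp observables.

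It then remains to upgrade the coincidence of the non-zero values to the equivalence $\psim$. For $\P_n\pleq\A^\I_n$ a relabeling suffices: the function $f$ on $\{0,1\}^n$ sending $1^a0^{n-a}\mapsto n+1-a$ and every other string to the label $1$ satisfies $\P_n=\A^\I_n\circ f^{-1}$ by \eqref{eq:relabeling}, because the other strings merged into label $1$ carry the zero operator. For the reverse relation $\A^\I_n\pleq\P_n$ I would use the deterministic Markov kernel $\kappa(\cdot\,|\,j)$ equal to the point mass at the unique string whose value is $\P_n(j)$; then $\sum_j\kappa(\vec\omega\,|\,j)\P_n(j)$ reproduces $\A^\I_n(\vec\omega)$ on the non-increasing strings and equals $O$ on all others, which is exactly \eqref{eq:ApleqB}. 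The analogous pair of kernels with $\P_{d-1}$ in place of $\P_n$ settles part (b). The only genuine work is in the first two steps, namely the operator bookkeeping for the ladder: keeping the shift $\phi_j\mapsto\phi_{j+a}$ and the boundary cases $a=d-1$, $a=d$, $n=d-1$, $n\geq d$ straight. Once that is done, the post-processing equivalence is immediate from the definitions.
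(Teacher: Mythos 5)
Your proposal is correct and is essentially the paper's argument: the same operator bookkeeping based on $L_1L_0=O$ and the explicit forms of $L_1^a$ and $L_0L_1^a$, which reduces $\A^\I_n$ to the $n+1$ monotone strings, followed by the relabeling/deterministic-kernel step that the paper treats as immediate. The only deviation is in part (b), where the paper establishes $\A^\I_{d-1}\psim\A^\I_d$ and invokes Proposition~\ref{prop:An}, whereas you carry the direct computation through for every $n\geq d$; both routes are equally valid.
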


\begin{proof}
\begin{enumerate}[(a)]
\item
Assume that $1 \leq n \leq d - 1$. Operators of the form $L_{\omega_n} L_{\omega_{n-1}} \cdots L_{\omega_1}$ are calculated to be
\begin{gather*}
	L_1^n
	=
	\sum_{l=1}^{d-n}
	\ket{\phi_{l+n}}
	\bra{\phi_l} ,
	\\
	L_0^{j} L_1^{n-j} 
	=
	\ket{\phi_{d-n+j}}
	\bra{\phi_{d-n+j}}
	\quad
	(1 \leq j \leq n) ,
\end{gather*}
and the others vanish since  $L_1 L_0  = 0 .$ Thus the nonzero operators $\A_n (\omega_1 , \cdots , \omega_n) $ are given by
\begin{gather*}
	\A_n
	(1 , \cdots , 1)
	=
	\sum_{l=1}^{d-n}
	\ket{\phi_{l}}
	\bra{\phi_l} ,
	\\
	\A_n
	(
	\underbrace{1,\cdots , 1}_{\text{$n-j$~elements}} 
	, 
	\underbrace{0 , \cdots , 0}_{\text{$j$~elements}}
	)
	=
	\ket{\phi_{d-n+j}}
	\bra{\phi_{d-n+j}} ,
	\quad
	(1 \leq j \leq n).
\end{gather*}
Hence we obtain $\A_n \psim \P_n .$
\item
Let us assume $d -1 \leq n .$ It is sufficient to show that $\A_{d-1} \psim \A_d .$
Since $\A_{d-1} \psim \P_{d-1} ,$ $\A_d$ is equivalent to an observable given by
\begin{equation*}
	\I_{\omega} (\P_{d-1} (j))
	\quad
	(\omega \in \{ 0,1 \} , 1 \leq j \leq d) .
\end{equation*}
The non-vanishing elements of this observable are given by
\begin{gather*}
	\I_{0} (\P_{d-1}  (d)  )
	=
	\P_{d-1} (d) ,
	\\
	\I_{1} (\P_{d-1} (j) )
	=
	\P_{d-1} 
	(j-1)
	\quad
	(2 \leq j \leq d) .
\end{gather*}
Thus we obtain $\A_d \psim \P_{d-1} \psim \A_{d-1} ,$ which completes the proof.\qedhere
\end{enumerate}
\end{proof}

Looking at the form of the observables $\P_j$, it is clear that 
\begin{equation*}
	\P_1 \prec \P_2 
	\prec \cdots \prec
	\P_{d-2} \prec \P_{d-1} \, .
\end{equation*}
Hence, from Prop. \ref{thm:ladder} we conclude that 
\begin{equation*}
	\A^\I_1 \prec \A^\I_2 
	\prec \cdots \prec
	\A^\I_{d-2} \prec \A^\I_{d-1} \psim \A^\I_d \psim \cdots
\end{equation*}
and therefore $\sat{\I} = d-1 .$

\section{Instruments with $\sat{\I}=\infty$}

\subsection{Observable $\A^\I_\infty$}
In the following we consider instruments with infinite saturation steps. For such cases it is convenient to consider the infinite consecutive measurement of $\I $ for which we have to study observables whose outcomes are elements of more general measurable spaces. The generalized definition of the observable is as follows. Let $(\meo , \salg)$ be a measurable space modelling the physical outcome space. A mapping $\A \colon \salg \to \lh$ is an observable, or a positive-operator-valued measure (POVM), if $\A (E) $ is positive for any $E \in \salg ,$ $\A (\Omega) = \id ,$
and $\A( \cup_j E_j  ) = \sum_j \A(E_j)$ (in the weak operator topology) for any disjoint sequence $\{ E_j \}_{j=1}^\infty \subset \salg .$ The triple $(\meo, \salg, \A)$ is also called an observable. For simplicity we will assume that the outcome spaces of observables are standard Borel spaces.

The preorder relation by classical post-processing is generalized as follows \cite{DdG97}: Let $(\meo_\A , \bor{\meo_\A} , \A)$ and $(\meo_\B , \bor{\meo_\B} , \B)$ be observables. We denote $\A \pleq \B$ if there exists a map $\kappa (\cdot | \cdot ) \colon \bor{\meo_\A } \times \meo_\B \to [0,1] $ such that $\kappa(\cdot | \omega')$ is a probability measure on $(\meo_\A , \bor{\meo_\A})$ for each $\omega' \in \meo_\B ,$ $\kappa (E | \cdot)$ is $\bor{\meo_\B}$-measurable for each $E \in \bor{\meo_\A} ,$ and 
\begin{equation*}
	\A (E)
	=
	\int_{\meo_2}
	\kappa(E|\omega_2)
	d \B (\omega_2)
\end{equation*}
for each $E \in \bor{\meo_1}$. As before, we denote $\A \psim \B$ if $\A \pleq \B $ and $\B \pleq \A$, and $\A \prec \B$  if $\A \pleq \B $ but not $\B \pleq \A$. The map $\kappa(\cdot | \cdot)$ is called a Markov kernel.

Let $\Omega$ be a finite set and let $\I$ be an instrument on $\Omega$. We define an observable $\A^\I_\infty$ with the countable Cartesian product space $(\meo^\infty , \bor{\meo^\infty}),$ called the \textit{infinite composition} of $\I ,$ by
\begin{equation}\label{eq:Ainf}
	\A^\I_\infty 
	(\{ \omega_1 \} \times \cdots \times \{ \omega_n \} \times \meo^\infty )
	=
	\A^\I_n
	(\omega_1 , \dots , \omega_n ) 
\end{equation}
for each $n\geq 1$ and $(\omega_1 , \dots , \omega_n ) \in \meo^n ,$ where $\A^\I_n$ is given by~\eqref{eq:An} and $\bor{\meo^\infty}$ is the $\sigma$-algebra generated by the cylinder sets $\{ \omega_1 \} \times \cdots \times \{ \omega_n \} \times \meo^\infty .$ The existence and uniqueness of $\A^\I_\infty$ satisfying \eqref{eq:Ainf} can be proved~\cite{Kuramochi15a} by using a POVM version of the Kolmogorov extension theorem \cite{Tumulka08}. Physically speaking, the observable $\A^\I_\infty$ corresponds to the infinite consecutive measurement of $\I .$

From the definition of the infinite composition~\eqref{eq:Ainf}, we see that $\A^\I_n \pleq \A^\I_\infty$ for any $n \geq 1.$ Furthermore, we have the following:
\begin{theorem}
\label{prop:step_Ainf}
Let $\I$ be an instrument with a finite outcome space $\meo ,$ and let $\A^\I_n$ $(n\geq 1)$ and $\A^\I_\infty$ be observables given by~\eqref{eq:An} and \eqref{eq:Ainf}, respectively. Then we have
\begin{equation}\label{eq:sat}
	\sat{\I}
	=
	\inf
	\{
	n \in \nat
	|
	\A^\I_n \psim \A^\I_\infty 
	\} .
\end{equation}
\end{theorem}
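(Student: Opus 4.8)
The plan is to prove the two inequalities
\[
\sat{\I}\ \geq\ \inf\{n\in\nat\mid \A^\I_n\psim\A^\I_\infty\}
\qquad\text{and}\qquad
\sat{\I}\ \leq\ \inf\{n\in\nat\mid \A^\I_n\psim\A^\I_\infty\}
\]
separately. For the direction $\leq$, suppose $\A^\I_n\psim\A^\I_\infty$ for some $n$. Since $\A^\I_n\pleq\A^\I_{n+1}\pleq\A^\I_\infty$ always holds (the first from the relabeling observation in Section~3, the second noted just before the theorem), transitivity of $\pleq$ together with $\A^\I_\infty\pleq\A^\I_n$ forces $\A^\I_{n+1}\pleq\A^\I_n$, hence $\A^\I_n\psim\A^\I_{n+1}$ and so $n\geq\sat{\I}$. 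Taking the infimum gives $\inf\{n\mid\A^\I_n\psim\A^\I_\infty\}\geq\sat{\I}$.

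The substantive direction is $\geq$, i.e.\ showing that once the chain stabilizes at step $m:=\sat{\I}<\infty$ (the case $\sat{\I}=\infty$ being vacuous), we in fact have $\A^\I_m\psim\A^\I_\infty$. By Proposition~\ref{prop:An} we know $\A^\I_m\psim\A^\I_k$ for every $k\geq m$, so the finite-stage observables are all informationally equivalent from step $m$ on; what must be shown is that this equivalence survives the passage to the infinite composition. Since $\A^\I_m\pleq\A^\I_\infty$ is automatic, the task is to produce a single Markov kernel $\kappa(\cdot\,|\cdot)\colon\bor{\meo^\infty}\times\meo^m\to[0,1]$ realizing $\A^\I_\infty\pleq\A^\I_m$. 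The idea is to build it as a limit of the finite-stage post-processings: for each $k\geq m$ Proposition~\ref{prop:An}'s proof (iterating the construction there) yields a Markov kernel $\kappa_k(\cdot\,|\cdot)\colon\meo^k\times\meo^m\to[0,1]$ with $\A^\I_k(\cdot)=\sum_{\va\in\meo^m}\kappa_k(\cdot\,|\va)\,\A^\I_m(\va)$. One checks these are consistent under marginalization in the first argument (because $\A^\I_{k}$ is a relabeling of $\A^\I_{k+1}$ via forgetting the last outcome, and the kernels are built compatibly), so that for each fixed $\va\in\meo^m$ the assignment $E\mapsto\kappa_k(E\,|\va)$ on cylinders of depth $k$ defines a consistent family of finitely additive set functions on the cylinder algebra. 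Extending by the Kolmogorov/Carathéodory theorem (or directly the POVM extension theorem of \cite{Tumulka08} cited above) gives a probability measure $\kappa(\cdot\,|\va)$ on $(\meo^\infty,\bor{\meo^\infty})$; measurability of $\va\mapsto\kappa(E\,|\va)$ on the generating cylinders is immediate since $\meo^m$ is finite, and then passes to all of $\bor{\meo^\infty}$ by a monotone-class argument. Finally, evaluating the identity $\A^\I_\infty(\text{cylinder of depth }k)=\A^\I_k(\cdot)=\sum_\va\kappa_k(\cdot\,|\va)\A^\I_m(\va)=\sum_\va\kappa(\text{cylinder}\,|\va)\A^\I_m(\va)$ on the generating cylinder sets and invoking uniqueness in the POVM extension theorem shows the post-processing identity holds on all of $\bor{\meo^\infty}$, i.e.\ $\A^\I_\infty\pleq\A^\I_m$, hence $\A^\I_m\psim\A^\I_\infty$ and $m\in\{n\mid\A^\I_n\psim\A^\I_\infty\}$, giving $\sat{\I}\geq\inf\{n\mid\A^\I_n\psim\A^\I_\infty\}$.

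The main obstacle I expect is the consistency/extension bookkeeping in the $\geq$ direction: one has to choose the finite-stage kernels $\kappa_k$ coherently (not just existentially, as Proposition~\ref{prop:An} gives them one $k$ at a time) so that their restrictions to shorter cylinders agree, and then justify that the operator-valued identity, known only on the cylinder algebra, propagates to the full $\sigma$-algebra — this is exactly where the uniqueness half of the POVM Kolmogorov extension theorem from \cite{Kuramochi15a,Tumulka08} does the work, applied to the two POVMs $E\mapsto\A^\I_\infty(E)$ and $E\mapsto\sum_\va\kappa(E\,|\va)\A^\I_m(\va)$, which agree on cylinders. Once that machinery is in place the rest is routine.
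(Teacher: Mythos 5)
Your first direction ($\inf\{n:\A^\I_n\psim\A^\I_\infty\}\geq\sat{\I}$) is exactly the paper's argument and is fine. The problem is in the substantive direction, at the step you yourself flag as the ``main obstacle'': the claim that the finite-stage kernels obtained by iterating Proposition~\ref{prop:An} ``are consistent under marginalization \dots because the kernels are built compatibly'' is not justified, and as stated it is false. Iterating Proposition~\ref{prop:An} produces kernels of the composed form $\kappa_{k+1}(\cdot|\va)=\sum_{\vec{b}\in\meo^{k}}\kappa^{(k)}(\cdot|\vec{b})\,\kappa_{k}(\vec{b}|\va)$, where $\kappa^{(k)}$ is \emph{some} kernel realizing $\A^\I_{k+1}\pleq\A^\I_{k}$. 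Summing out $\omega_{k+1}$ gives $\sum_{\vec{b}}\mu^{(k)}_{\vec{b}}(\cdot)\,\kappa_{k}(\vec{b}|\va)$, where $\mu^{(k)}_{\vec{b}}$ is the marginal of $\kappa^{(k)}(\cdot|\vec{b})$ on the first $k$ coordinates, and there is no reason this equals $\kappa_{k}(\cdot|\va)$. Forcing it by choosing $\kappa^{(k)}(\cdot|\vec{b})$ supported on extensions of $\vec{b}$ would require $\A^\I_{k+1}(\vec{\omega},\omega_{k+1})$ to be proportional to $\A^\I_{k}(\vec{\omega})$, which fails for generic instruments. Since the existence of a marginal-consistent family is \emph{equivalent} to $\A^\I_\infty\pleq\A^\I_{\sat{\I}}$, this is not bookkeeping: it is the entire content of the theorem, and your sketch assumes it.

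The gap is repairable, but it needs an actual argument. One route: for each $k\geq m:=\sat{\I}$ let $K_k$ be the set of Markov kernels realizing $\A^\I_k\pleq\A^\I_m$; each $K_k$ is a nonempty compact convex subset of $[0,1]^{\meo^k\times\meo^m}$, and because $\sum_{\omega_{k+1}}\A^\I_{k+1}(\vec{\omega},\omega_{k+1})=\A^\I_k(\vec{\omega})$, marginalization over the last coordinate maps $K_{k+1}$ into $K_k$. The inverse limit of this system of nonempty compact sets is nonempty, and any point of it is a consistent family to which your Kolmogorov-extension-plus-uniqueness argument then applies correctly. The paper avoids all of this by invoking Theorem~4 of \cite{Kuramochi15a}: since the observable obtained by one application of $\I$ followed by $\A^\I_{\sat{\I}}$ is equivalent to $\A^\I_{\sat{\I}}$ itself, that theorem yields $\A^\I_\infty\pleq\A^\I_{\sat{\I}}$ directly. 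Either cite that result or supply the compactness argument; as written, the proof does not go through.
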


\begin{proof}
We denote the RHS of~\eqref{eq:sat} as $\sat{\I}^\prime .$ From 
\begin{equation}
\A^\I_{\sat{\I}^\prime} \pleq \A^\I_{\sat{\I}^\prime  +1 } \pleq \A^\I_\infty \psim \A^\I_{\sat{\I}^\prime} 
\end{equation}
we have $\A^\I_{\sat{\I}^\prime}  \psim \A^\I_{\sat{\I}^\prime +1} $ and thus $\sat{\I} \leq \sat{\I}^\prime . $ In order to show the converse inequality, it is sufficient to prove $\A^\I_\infty \pleq \A^\I_{\sat{\I}}$. Without loss of generality, we may assume $\sat{\I} < \infty .$ Since the observable corresponding to a consecutive measurement of $\I$ followed by $\A^\I_{\sat{\I}}$ is equivalent to $\A^\I_{\sat{\I}}$ itself, from Theorem~4 of \cite{Kuramochi15a}, we obtain $\A^\I_\infty \pleq \A^\I_{\sat{\I}} ,$ and the assertion holds.
\end{proof}

\subsection{L\"uders instrument of a two-outcome observable}\label{subsec:luders}
Let $A\in\lh$ be an effect, i.e. $0 \leq A \leq \id ,$ and let $\I$ be an instrument
with the outcome space $\Omega = \{0,1\}$ and defined by
\begin{equation}\label{eq:LudersInstrument}
\I_0 (T) = \sqrt{\id -A} T \sqrt{\id - A},\qquad\I_1 (T) = \sqrt{A} T \sqrt{A}
\end{equation}
for all $T\in\lh$. Let $(\meo^\infty , \bor{\meo^\infty})$ be the countable product space of $\meo ,$ and let $\A^\I_n$ and $\A^\I_\infty$ be the observables for the consecutive measurement processes defined by \eqref{eq:An} and \eqref{eq:Ainf}, respectively. Let $([0,1] , \bor{[0,1]} , \P^A )$ be the spectral measure of $A$ such that
\begin{equation*}
	A 
	=
	\int_{[0,1]} 
	\lambda 
	d \P^A (\lambda) ,
\end{equation*}
where $\bor{[0,1]}$ is the $\sigma$-algebra generated by the intervals. Then we have the following:
\begin{theorem}
\label{prop:PA}
\begin{equation}\label{eq:PA}
	\A^\I_\infty \psim \P^A .
\end{equation}
\end{theorem}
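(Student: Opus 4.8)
The plan is to make the observables $\A^\I_n$ explicit, reduce the whole problem to a ``binomial'' sampling picture, and then establish the two post-processing inequalities $\A^\I_\infty\pleq\P^A$ and $\P^A\pleq\A^\I_\infty$ by writing down concrete Markov kernels, verifying the operator identities by passing to the scalar spectral measures $\mu_\psi(E):=\bra{\psi}\P^A(E)\ket{\psi}$ and invoking the uniqueness in the POVM Kolmogorov extension theorem already cited in~\cite{Kuramochi15a,Tumulka08}.

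First I would compute $\A^\I_n$. Writing $K_0:=\sqrt{\id-A}$ and $K_1:=\sqrt A$, so that $\I_\omega(T)=K_\omega TK_\omega$, the operators $K_0,K_1$ are self-adjoint and commute (both are functions of $A$), so iterating the composition collapses to
\[
\A^\I_n(\omega_1,\dots,\omega_n)=K_{\omega_1}^2K_{\omega_2}^2\cdots K_{\omega_n}^2=A^{k}(\id-A)^{n-k},\qquad k:=\#\{\,i:\omega_i=1\,\}.
\]
Thus $\A^\I_n(\omega_1,\dots,\omega_n)$ depends only on the number of $1$'s, and by the functional calculus the cylinder values of $\A^\I_\infty$ are $\A^\I_\infty(\{\omega_1\}\times\cdots\times\{\omega_n\}\times\meo^\infty)=\int_{[0,1]}\lambda^k(1-\lambda)^{n-k}\,d\P^A(\lambda)$; in particular $\bra{\psi}\A^\I_\infty(\text{cyl})\ket{\psi}=\int_{[0,1]}\lambda^k(1-\lambda)^{n-k}\,d\mu_\psi(\lambda)$ for every unit vector $\psi$. (This is the only point where the two-outcome hypothesis enters: commutativity of the Kraus operators fails in general.) For $\lambda\in[0,1]$ let $\mathrm{Ber}_\lambda$ be the Bernoulli law on $\meo=\{0,1\}$ with $\mathrm{Ber}_\lambda(\{1\})=\lambda$ and $\mathrm{Ber}_\lambda^{\otimes\infty}$ its product on $(\meo^\infty,\bor{\meo^\infty})$. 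To get $\A^\I_\infty\pleq\P^A$ I take the Markov kernel $\kappa(\cdot|\lambda):=\mathrm{Ber}_\lambda^{\otimes\infty}$: on cylinders $\lambda\mapsto\kappa(\text{cyl}|\lambda)$ is the polynomial $\lambda^k(1-\lambda)^{n-k}$, hence Borel, and a Dynkin-system argument extends measurability to all of $\bor{\meo^\infty}$. The operator $E\mapsto\int_{[0,1]}\kappa(E|\lambda)\,d\P^A(\lambda)$ is readily checked to be a POVM (for each $\psi$ it is the scalar measure $E\mapsto\int\mathrm{Ber}_\lambda^{\otimes\infty}(E)\,d\mu_\psi(\lambda)$), and on cylinders it reproduces the values just computed; by uniqueness in the Kolmogorov extension theorem it therefore equals $\A^\I_\infty$, giving $\A^\I_\infty\pleq\P^A$.

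For the reverse inequality $\P^A\pleq\A^\I_\infty$ I would use the empirical-frequency map $g\colon\meo^\infty\to[0,1]$, $g(\omega):=\limsup_{n\to\infty}\frac1n\sum_{i=1}^n\omega_i$, which is Borel, together with the deterministic Markov kernel $\kappa'(E|\omega):=\chi_E(g(\omega))$, so that $\int_{\meo^\infty}\kappa'(E|\cdot)\,d\A^\I_\infty=\A^\I_\infty(g^{-1}(E))$. Fixing a unit vector $\psi$, the scalar measure $\nu_\psi(\cdot):=\bra{\psi}\A^\I_\infty(\cdot)\ket{\psi}$ has the same cylinder values as $\int_{[0,1]}\mathrm{Ber}_\lambda^{\otimes\infty}\,d\mu_\psi(\lambda)$ and hence coincides with it. By the strong law of large numbers $\mathrm{Ber}_\lambda^{\otimes\infty}(\{g=\lambda\})=1$, so the push-forward satisfies $g_*\nu_\psi(E)=\int_{[0,1]}\chi_E(\lambda)\,d\mu_\psi(\lambda)=\mu_\psi(E)=\bra{\psi}\P^A(E)\ket{\psi}$. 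Since $\bra{\psi}\A^\I_\infty(g^{-1}(E))\ket{\psi}=\bra{\psi}\P^A(E)\ket{\psi}$ for all $\psi$, polarisation gives $\A^\I_\infty\circ g^{-1}=\P^A$, i.e.\ $\P^A\pleq\A^\I_\infty$. Combining the two inequalities yields $\A^\I_\infty\psim\P^A$.

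The genuine content sits in the last step: one must show that the asymptotic frequency recovers the spectral value $\lambda$ \emph{exactly} — that repeating the L\"uders measurement infinitely often loses no information about $\P^A$ — which is precisely the strong law of large numbers applied inside the de Finetti-type mixture $\int\mathrm{Ber}_\lambda^{\otimes\infty}\,d\P^A(\lambda)$; everything else, including the degenerate cases such as $A$ a scalar multiple of $\id$ (where both sides become trivial), is subsumed. The main technical obstacle is the routine but careful bookkeeping that promotes the cylinder-set identities to identities of POVMs on $\bor{\meo^\infty}$, done through the scalar spectral measures $\mu_\psi$ and the uniqueness clause of the POVM Kolmogorov extension theorem of~\cite{Tumulka08}.
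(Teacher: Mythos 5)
Your proof is correct and follows essentially the same route as the paper: the downward direction via the infinite Bernoulli product kernel $\kappa(\cdot|\lambda)=\mathrm{Ber}_\lambda^{\otimes\infty}$ checked on cylinder sets, and the upward direction via the empirical-frequency random variable and the strong law of large numbers. Your cosmetic variations (using $\limsup$ in place of the paper's limit-where-it-exists, and invoking Kolmogorov-extension uniqueness plus polarisation to promote the cylinder and quadratic-form identities to POVM identities) are all sound.
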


\begin{proof}
We first show $\A^\I_\infty \pleq \P^A .$ We define a Markov kernel $\kappa_1 (\cdot | \cdot) \colon 2^\Omega\times[0,1] \to [0,1] $ by
\[
	\kappa_1 (\{ 1 \} |\lambda  ) = \lambda\, ,\quad  \kappa_1 (\{ 0 \} |\lambda )
	=
	1-\lambda,\qquad0\leq\lambda\leq1,
\]
and we also define a probability measure $\kappa_\infty (\cdot | \lambda)$ on $(\meo^\infty , \bor{\meo^\infty}$ by the infinite direct product of $\kappa_1 (\cdot | \lambda) .$ Since $\kappa_1 (\{ \omega_1 \}| \cdot)$ is measurable, the Dynkin class theorem assures that $[0,1] \ni \lambda \mapsto \kappa_\infty (E|\lambda)$ is measurable for each $E \in \bor{\meo^\infty},$ and therefore $\kappa_\infty(\cdot | \cdot)$ is a Markov kernel. For each $n \geq 1$ and each $(\omega_i)_{i=1}^n \in \Omega^n , $ we have
\begin{align*}
	\A^\I_\infty (\{ \omega_1 \} \times \cdots \times \{ \omega_n \} \times \Omega^\infty)
	&=
	\I_{\omega_1} \circ \cdots \circ \I_{\omega_n}
	(\id)
	\\
	&=
	A^{
	\sum_{i=1}^n \omega_i
	} (\id  - A)^{n - \sum_{i=1}^n \omega_i}
	\\
	&=
	\int_{[0,1]}
	\lambda^{\sum_{i=1}^n \omega_i} 
	(1- \lambda )^{n - \sum_{i=1}^n \omega_i}
	d \P^A ( \lambda ) 
	\\
	&=
	\int_{[0,1]}
	\kappa_\infty
	(\{ \omega_1 \} \times \cdots \times \{ \omega_n \} \times \Omega^\infty | \lambda  )
	d \P^A (\lambda ),
\end{align*}
implying
$$
\A^\I_\infty(E)=\int_{[0,1]}\kappa_\infty(E|\lambda)d\P^A(\lambda)
$$
for all $E\in\bor{\Omega^\infty}$. Hence $\A^\I_\infty \pleq \P^A .$

In order to show $\P^A \pleq \A^\I_\infty ,$ we define a sequence of stochastic variables $X_n \colon \Omega^\infty \to [0,1]$ by
\begin{equation*}
	X_n (\omega) 
	:=
	\frac{1}{n}
	\sum_{i=1}^n \omega_i ,
\end{equation*}
where $\omega := (\omega_i)_{i=1}^\infty .$ The strong law of large numbers implies that $X_n (\omega) $ converges to $\lambda$ $\kappa_\infty (\cdot | \lambda)$-almost surely for every $\lambda \in [0,1].$ If we define a stochastic variable $X_\infty \colon \Omega^\infty \to [0,1]$ by
\begin{equation*}
	X_{\infty} (\omega)
	=
	\begin{cases}
		\displaystyle
		\lim_{n \to \infty}
		X_n (\omega)
		&\text{if the limit exists;}
		\\
		0 
		&\text{otherwise,} 
	\end{cases}
\end{equation*}
then $X_{\infty} (\omega) = \lambda$ $\kappa_\infty (\cdot | \lambda)$-almost surely.
Thus for each Borel subset $E$ of $[0,1],$ we have
\begin{align*}
	\int_{\Omega^\infty}
	\chi_E (X_\infty (\omega))
	d \A^\I_\infty (\omega)
	&=
	\A^\I_\infty
	(  X_\infty^{-1}  (E)   )
	\\
	&=
	\int_{[0,1]}
	\kappa_\infty
	( X_\infty^{-1}  (E)   |   \lambda   )
	d \P^A ( \lambda)
	\\
	&=
	\int_{[0,1]}
	\left(
	\int_{\meo^\infty}
	\chi_E (X_\infty (\omega))
	\kappa_\infty
	( d \omega  |   \lambda   )
	\right)
	d \P^A (\lambda)
	\\
	&=
	\int_{[0,1]}
	\left(
	\int_{\meo^\infty}
	\chi_E (\lambda)
	\kappa_\infty
	( d \omega  |   \lambda   )
	\right)
	d\P^A ( \lambda)
	\\
	&=
	\int_{[0,1]}
	\chi_E (\lambda)
	d \P^A ( \lambda)
	\\
	&=
	\P^A(E),
\end{align*}
where $\chi_E (\cdot)$ is the indicator function. The above equation implies the desired relation $\P^A \pleq \A^\I_\infty ,$ which completes the proof.
\end{proof}
The next theorem states that the L\"uders instrument of \eqref{eq:LudersInstrument} does not saturate after any finite number of steps except for in a couple of trivial cases.

\begin{theorem}
\label{thm:specA}
Let $A$ be an effect and $\I$ the associated L\"uders instrument given by \eqref{eq:LudersInstrument}.
\begin{enumerate}[(a)]
\item \label{item:specA1}
If $A$ is a projection, then $\sat{\I}  = 1$.
\item \label{item:specA2}
If $A = \lambda \id$ for some $0\leq\lambda\leq 1$, then $\sat{\I}=1$.
\item \label{item:specA3}
In all other cases $\sat{\I} = \infty .$
\end{enumerate}
\end{theorem}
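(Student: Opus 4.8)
The plan is to treat the three cases separately, reducing (a) and (b) to results already established and then focusing the real work on (c). For part (a), if $A$ is a projection then $\sqrt{A}=A$ and $\sqrt{\id-A}=\id-A$, so the L\"uders instrument satisfies $\I_1\circ\I_1(\id)=A$, $\I_0\circ\I_0(\id)=\id-A$, and $\I_0\circ\I_1(\id)=\I_1\circ\I_0(\id)=\nul$; this is exactly the repeatability condition, so $\sat{\I}=1$ by the discussion of repeatable instruments. For part (b), if $A=\lambda\id$ then each $\I_\omega(T)$ is a scalar multiple of $T$, so $\A^\I_n(\omega_1,\ldots,\omega_n)=\lambda^{\sum\omega_i}(1-\lambda)^{n-\sum\omega_i}\id$ is a multiple of the identity for every outcome; every such observable is trivial (post-processing equivalent to the one-outcome observable), so $\A^\I_1\psim\A^\I_2$ and $\sat{\I}=1$. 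These two cases are short.

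The substance is part (c). Here $A$ is neither a projection nor a scalar, so its spectral measure $\P^A$ is not a projection-valued measure supported on $\{0,1\}$ and is also not concentrated at a single point; in particular $\P^A$ is a genuinely non-trivial sharp observable. By Theorem~\ref{prop:PA} we have $\A^\I_\infty\psim\P^A$, and by Theorem~\ref{prop:step_Ainf} we know $\sat{\I}=\inf\{n:\A^\I_n\psim\A^\I_\infty\}=\inf\{n:\A^\I_n\psim\P^A\}$. So it suffices to show that $\A^\I_n\not\psim\P^A$ for every finite $n$, or equivalently (since $\A^\I_n\pleq\P^A$ always holds) that $\P^A\not\pleq\A^\I_n$ for any $n$. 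The key observation is that for the L\"uders instrument, $\A^\I_n(\omega_1,\ldots,\omega_n)=A^{k}(\id-A)^{n-k}$ where $k=\sum_i\omega_i$; after relabelling by the sufficient statistic $k$, the observable $\A^\I_n$ is equivalent to the $(n+1)$-outcome observable $k\mapsto\binom{n}{k}A^k(\id-A)^{n-k}=\int_{[0,1]}\binom{n}{k}\lambda^k(1-\lambda)^{n-k}\,d\P^A(\lambda)$. Thus $\A^\I_n$ is the post-processing of $\P^A$ by the binomial Markov kernel $\lambda\mapsto\mathrm{Bin}(n,\lambda)$, and it generates, as a von Neumann algebra / operator system, precisely the $(n+1)$-dimensional space of polynomials in $A$ of degree $\le n$. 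The plan is to show this is strictly smaller than what $\P^A$ generates: pick two distinct points $\lambda_1<\lambda_2$ in the spectrum of $A$ (possible since $A$ is not scalar) and a Borel set $E$ separating them; if $\P^A\pleq\A^\I_n$ held, then $\P^A(E)$ would be expressible via a Markov kernel applied to the binomial post-processing of $\P^A$, hence $\P^A(E)$ would lie in the norm-closed operator system generated by $\{A^k(\id-A)^{n-k}\}_{k=0}^n$. One then derives a contradiction — either by a direct spectral/polynomial-approximation argument (a degree-$n$ polynomial cannot equal $\chi_E$ on the spectrum when the spectrum has more than $n+1$ points, or more carefully, the composition of the binomial kernel with any further Markov kernel still cannot reproduce the indicator unless the spectrum is finite with few points), or by invoking that $A$ is not a projection to ensure the spectrum is large enough. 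The honest version: if $\mathrm{spec}(A)$ is infinite, no finite post-processing of the finitely-many-valued $\A^\I_n$ can recover $\P^A$, since $\A^\I_n$ has at most $n+1$ distinct outcome-operators and hence $\A^\I_n\pleq$ (any observable with $\le n+1$ outcomes), so $\P^A\pleq\A^\I_n$ would force $\P^A$ to have an $(n+1)$-outcome version, impossible when $\mathrm{spec}(A)$ is infinite; if $\mathrm{spec}(A)$ is finite, say $\{\lambda_1,\ldots,\lambda_r\}$ with $r\ge2$ and some $\lambda_i\notin\{0,1\}$, one argues that the binomial-postprocessed observable, while finitely-valued, still does not refine to $\P^A$ because the binomial matrix $M_{k,i}=\binom{n}{k}\lambda_i^k(1-\lambda_i)^{n-k}$, viewed as a stochastic-type relation, cannot be inverted by a Markov kernel — this is a finite linear-algebra / Birkhoff-polytope statement about when $\P^A\pleq(\P^A\circ\mathrm{Bin}(n,\cdot))$ fails.

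I expect the main obstacle to be exactly this last point: showing that $\P^A$ is \emph{not} post-processing-equivalent to its own binomial coarse-graining $\A^\I_n$ for any finite $n$. When $\mathrm{spec}(A)$ is infinite this is clean via the cardinality-of-outcomes argument (a finitely-valued observable can never be $\psim$-above a sharp observable with infinitely many distinct spectral projections). The delicate sub-case is a finite spectrum with $r\ge 2$ points, at least one genuinely in $(0,1)$: then $\A^\I_n$ is also finitely valued, so one must show that the columns of the binomial matrix are "too spread out" — concretely, that there is no Markov kernel $\nu(\cdot|k)$ with $\sum_k\nu(\{i\}|k)\binom{n}{k}\lambda_j^k(1-\lambda_j)^{n-k}=\delta_{ij}$ for all $i,j$, which fails because that would force the binomial distributions $\mathrm{Bin}(n,\lambda_1),\ldots,\mathrm{Bin}(n,\lambda_r)$ to have disjoint supports, impossible for distinct $\lambda$'s in $(0,1)$. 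Making the bookkeeping of the spectral-measure version of post-processing rigorous in this mixed (partly continuous, partly atomic) setting, and correctly handling the atoms of $\P^A$ at $0$ and $1$ (which behave like the projection case), is the part that needs care; everything else is routine.
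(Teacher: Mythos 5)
Your parts (a) and (b) match the paper's proof. For part (c) you take a genuinely different route. The paper also reduces to showing $\A^\I_n \prec \P^A$ for every finite $n$, but instead of analyzing Markov kernels directly it uses monotonicity of the Hellinger distance under post-processing: picking $\lambda_1\in(0,1)$ and $\lambda_2\neq\lambda_1$ in $\sigma(A)$ and unit vectors $\psi_1,\psi_2$ supported on disjoint spectral intervals around them, it notes that $\Pi^{\P^A}_{\psi_1}$ and $\Pi^{\P^A}_{\psi_2}$ are mutually singular (squared Hellinger distance $1$) while $\Pi^{\A^\I_n}_{\psi_1}(\omega)=\int\lambda^{k}(1-\lambda)^{n-k}\,d\Pi^{\P^A}_{\psi_1}(\lambda)$ is strictly positive for every $\omega$, so the two finite distributions overlap and their squared Hellinger distance is $<1$; since $\psim$ preserves this distance, $\A^\I_n\not\psim\P^A$. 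This is exactly your "binomial distributions with distinct parameters, one in $(0,1)$, cannot have disjoint supports" insight, but packaged so that it needs no case split on $\sigma(A)$ and no separate treatment of atoms at $0$ and $1$ — the two delicate points you correctly flag in your plan. Your algebraic route does work: in the finite-spectrum case the linear independence of the orthogonal spectral projections forces $\sum_k\nu(j|k)\binom{n}{k}\lambda_i^k(1-\lambda_i)^{n-k}=\delta_{ij}$, which indeed forces disjoint supports and fails; in the infinite-spectrum case your polynomial argument (a degree-$\le n$ polynomial that is idempotent on an infinite set is constant, so every $\P^A(E)$ would be $0$ or $\id$) is sound. One caution: your first-pass justification for the infinite case, "a finitely-valued observable can never be $\psim$-above a sharp observable with infinitely many outcomes, by cardinality," is not a cardinality statement as literally written (the coefficient vectors live in the uncountable set $[0,1]^{n+1}$); it is true, but needs the orthogonality/pigeonhole or the polynomial argument you sketch elsewhere, so you should rely on the latter. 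Overall your plan is correct but costs more bookkeeping than the paper's unified Hellinger-distance argument.
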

\begin{proof}
The assertion \eqref{item:specA1} follows from the repeatability of the instrument $\I .$ When $A = \lambda \id ,$ we have $\A^\I_\infty \psim  \P^A \psim \{  \id \} $
and the assertion \eqref{item:specA2} is obvious.

Now we prove the assertion \eqref{item:specA3}, for which it is sufficient to show $\A^\I_n \prec \P^A$ for each $n \geq 1 .$ By the assumption, there exist $\lambda_1 $ and $\lambda_2$ in the spectrum $\sigma (A)$ of $A$ such that
\[
	0 < \lambda_1 <1,
	\quad
	\lambda_1 \neq \lambda_2 .
\]
We fix $\epsilon > 0$ such that $\epsilon < \lambda_1  < 1 - \epsilon$ and
\begin{gather*}
	[\lambda_1 - \epsilon , \lambda_1 + \epsilon]
	\cap
	[\lambda_2 - \epsilon , \lambda_2 + \epsilon]
	=
	\emptyset \, .
\end{gather*}
Since $\lambda_1 , \lambda_2 \in \sigma (A) ,$ the operators
\[
	P_1 := \P^A (  [\lambda_1 - \epsilon , \lambda_1 + \epsilon]   ),
	\quad
	P_2 := \P^A(  [\lambda_2 - \epsilon , \lambda_2 + \epsilon]   )
\]
are non-zero orthogonal projections. Hence, there exist normalized vectors $\psi_1$ and $\psi_2$ such that $P_i \psi_i = \psi_i $ $(i=1,2) .$ We will show that
\begin{equation}
	H^2 (\Pi_{\psi_1}^{\A^\I_n} , \Pi_{\psi_2}^{\A^\I_n}  )
	<
	H^2
	(\Pi_{\psi_1}^{\P^A} , \Pi_{\psi_2}^{\P^A}  )
	=1,
	\label{eq:hellinger}
\end{equation}
where $H^2 (\cdot, \cdot)$ is the square of the Hellinger distance defined by
\[
	H^2( p , q)
	:=
	\frac{1}{2}
	\int_{\meo^\prime}
	\left(
	\sqrt{d p(\omega^\prime)}   -   \sqrt{dq(\omega^\prime)}    
	\right)^2
\]
for arbitrary probability measures $p$ and $q$ on a measurable space $(\meo^\prime ,\bor{\meo^\prime}) $, and
\begin{gather*}
	\Pi_{\psi}^{\A^\I_n}
	(\omega)
	:=
	\bra{\psi }  \A^\I_n (\omega)  \ket{\psi}    ,
	\quad
	\Pi_{\psi}^{\P^A}
	(E)
	:=
	\bra{\psi } \P^A (E)  \ket{\psi    } ,
\end{gather*}
are the distributions of the corresponding measurement outcomes when the system is prepared in vector state $\psi .$ Since $\A^\I_n \psim \P^A $ implies 
$
	H^2 (\Pi_{\psi}^{\A^\I_n} , \Pi_{\psi^\prime}^{\A^\I_n}  )
	=
	H^2
	(\Pi_{\psi}^{\P^A} , \Pi_{\psi^\prime}^{\P^A}  )
$
for all unit vectors $\psi$ and $\psi^\prime$~\cite{JePuVi08}, the claim immediately follows from \eqref{eq:hellinger}.

Now we prove \eqref{eq:hellinger}. Since $\Pi_{\psi_1}^{\P^A}$ and $\Pi_{\psi_2}^{\P^A}$ are concentrated on disjoint intervals $[\lambda_1 - \epsilon , \lambda_1 + \epsilon ] $ and $[\lambda_2 - \epsilon , \lambda_2 + \epsilon ]  ,$ respectively, they are singular to each other and hence $H^2 (\Pi_{\psi_1}^{\P^A} , \Pi_{\psi_2}^{\P^A}  ) =1.$ On the other hand, for each $\omega = (\omega_i )_{i=1}^n \in \Omega^n $ we have
\begin{align*}
	\Pi_{\psi_1}^{\A^\I_n} (\omega)
	&=
	\int_{[0,1]}
	\lambda^{\sum_{i=1}^n \omega_i   } (1-\lambda)^{n- \sum_{i=1}^n \omega_i  }
	d \Pi^{\P^A}_{\psi_1} (\lambda)
	\\
	&=
	\int_{[\lambda_1 - \epsilon,  \lambda_1 + \epsilon  ]}
	\lambda^{\sum_{i=1}^n \omega_i   } (1-\lambda)^{n- \sum_{i=1}^n \omega_i  }
	d \Pi^{\P^A}_{\psi_1} (\lambda)
	\neq 0 .
\end{align*}
Therefore,
\begin{align*}
	H^2  (\Pi_{\psi_1}^{\A^\I_n} , \Pi_{\psi_2}^{\A^\I_n}  )
	&=
	\frac{1}{2}
	\sum_{\omega \in \meo^n}
	\left(
	\sqrt{  \Pi_{\psi_1}^{\A^\I_n} (\omega) }   -   \sqrt{ \Pi_{\psi_2}^{\A^\I_n}  (\omega)  }    
	\right)^2
	\\
	&=
	1 -
	\sum_{\omega \in \meo^n}
	\sqrt{ \Pi_{\psi_1}^{\A^\I_n} (\omega) \Pi_{\psi_2}^{\A^\I_n} (\omega)     } < 1.
\end{align*}
Thus we obtain~\eqref{eq:hellinger}, which completes the proof.
\end{proof}

\section{Discussion}

In our investigation we have given examples of instruments $\I$ with $\sat{\I}=1,\,2,\ldots$ (saturation after finitely many steps) and with $\sat{\I}=\infty$ (insaturable measurement). The natural question is to characterize these types of instruments. For instance, we do not yet know if the mixtures of repeatable and preparative instruments are the only instruments with the saturation step $1$. Moreover, it is still unsure how the saturation step of a mixture of instruments relates to the saturation steps of the component instruments in the convex decomposition.

We have shown that repeated measurements can be used to cancel the noise in an unsharp two-outcome observable. It remains to be seen if similar noise reduction occurs in other measurements than those corresponding to the L\"uders instruments. We hope that this work stimulates further theoretical and experimental investigations into the full potential of repeated uses of a single quantum instrument. 

\section*{Acknowledgements}

The support of Japan Society for the Promotion of Science is acknowledged by E.H. (as an Overseas Researcher under Postdoctoral Fellowship of Japan Society for the Promotion of Science) and by Y.K. (KAKENHI Grant No. 269905).

\section*{References}

\end{document}